\newlist{steps}{enumerate}{1}
\setlist[steps, 1]{label = Step \arabic*:}
\pgfplotsset{compat=1.15}
\newtheorem{lemma}{Lemma}
\newtheorem{remark}{Remark}
\newcommand{\R}{\mathbb{R}}
\newcommand{\tm}[1]{{\color{black}{#1}}}
\begin{document}
\begin{frontmatter}
\title{On the convexity of static output feedback control synthesis for systems with lossless nonlinearities}

\author[UMN]{Talha Mushtaq}\ead{musht002@umn.edu},    % Add the 
\author[UMich]{Peter Seiler}\ead{pseiler@umich.edu},               % e-mail address 
\author[UMN]{Maziar S. Hemati}\ead{mhemati@umn.edu}  % (ead) as shown

\address[UMN]{Aerospace Engineering and Mechanics,
University of Minnesota,
Minneapolis, MN 55455, USA}  % Please supply                              
\address[UMich]{Electrical Engineering and Computer Science,
 University of Michigan, Ann Arbor, MI 48109, USA}             % full addresses
% \address[Hemati]{Aerospace Engineering and Mechanics,
% University of Minnesota,
% Minneapolis, MN 55455, USA}
\begin{abstract}
Computing a stabilizing static output-feedback (SOF) controller is an NP-hard problem, in general.
Yet, these controllers have amassed popularity in recent years because of their practical use in feedback control applications, such as fluid flow control and  sensor/actuator selection. 
The inherent difficulty of synthesizing SOF controllers is rooted in solving a series of non-convex problems that make the solution computationally intractable.
In this note, we show that SOF synthesis is a convex problem for the specific case of systems with a $lossless$ (i.e., energy-conserving) nonlinearity. 
Our proposed method ensures asymptotic stability of an SOF controller by enforcing the $lossless$ behavior of the nonlinearity using a quadratic constraint approach.
In particular, we formulate a bilinear matrix inequality~(BMI) using the approach, then show that the resulting BMI can be recast as a linear matrix inequality~(LMI).
The resulting LMI is a convex problem whose feasible solution, if one exists, yields an asymptotically stabilizing SOF controller.
\end{abstract}
%\section{Nomenclature}
\end{frontmatter}
\section{Introduction}
Static-output feedback (SOF) stabilization is an open controls problem \cite{bernstein1992, blondel2000survey, syrmos1997static}. 
The bilinearity of the SOF problem for linear time-invariant (LTI) systems makes the problem non-convex, %which requires numerous computations to search for an optimal solution, thus, 
rendering the solution intractable for high-dimensional LTI systems.
Despite the difficulties of SOF control synthesis, the simple structure of SOF control has made it an attractive choice in practice.
%it is a well-studied area in stability of LTI systems and consequently, 
As such, various SOF synthesis algorithms have been developed for LTI systems~\cite{dinh2011combining,el1997cone,cao1998static,iwasaki1995xy,toivonen1985descent}.
%exist that can be used to solve the bilinear SOF problem for a given practical LTI system 
%
On the contrary, SOF synthesis methods are less prevalent for nonlinear systems, wherein a similar bilinearity issue arises.
Nonetheless, SOF controllers would be desirable in many nonlinear applications.
%On the contrary, methods for synthesizing SOF controllers for nonlinear systems are less prevalent, similar bilinearity issue arise.
%

\tm{Computing an SOF gain is NP-hard in general \cite{blondel1997}, %\tm{i.e., for a given system size $n$, the time complexity of solving the SOF problem is not $\mathcal{O}(n^k)$ for some positive $k$};
i.e.,~the solution method for the SOF problem has a time complexity greater than $\mathcal{O}(n^k)$ for some positive $k$ and input size $n$;
therefore, exploiting the structure of a given system is the best approach to obtain a solution in a computationally tractable manner.}
No universally applicable algorithm exists for SOF synthesis of generic nonlinear systems; different systems possess different underlying structures that can be exploited.
%to every nonlinear system for the SOF problem does not exist since each nonlinearity is unique.
%
%However, synthesis methods for a specific class of nonlinear systems are available such as the 
For example, SOF synthesis methods have been developed for systems for which the nonlinearity is passive~\cite{Dalsmo1995}, polynomial~\cite{zhao2010robust,saat2015nonlinear} or Lipschitz~\cite{ekramian2020static}. 
% \cite{ekramian2020static} SOF synthesis methods for Lipschitz nonlinear systems and the authors in \cite{zhao2010robust,saat2015nonlinear} demonstrate SOF design techniques for polynomial nonlinear systems.
%
Furthermore, sufficient conditions for local SOF stabilization have been established for nonlinear (affine) systems~\cite{astolfi2002hamilton}.

%\msh{should we move the second paragraph to this location? then merge this with the last paragraph? Right now, there seem to be some redundancies between this and the next paragraph otherwise\dots}
In this note, we will use standard Lyapunov arguments to establish a special class of system for which the SOF stabilization problem is convex: i.e., that of an LTI system in feedback with a \emph{lossless} nonlinearity or uncertainty element.
Such systems arise, for example, in the context of fluid dynamics, where it is well-established that the nonlinear terms in the incompressible Navier-Stokes equations are lossless and kinetic-energy conserving~\cite{Schmid2001}.
In fluid dynamics, the lossless (or more generally passivity) property of the nonlinearity has been exploited for stability analysis~\cite{Schmid2001, kalurAIAA2020,liu20,kalur2021nonlinear,kalur2021estimating,mushtaq2021feedback}, model reduction~\cite{antoulas2005approximation}, and feedback control synthesis~\cite{sharma2011,Heins2016}.
Further, the utility of (linear) SOF strategies for fluid flow control has also been investigated in recent works~\cite{yaoAIAAJ2022,yaoTCFD2022}
%yao2018, yao2019,yao2020reducing}.

%To this end, we address the bilinearity of the SOF problem in this paper for $lossless$ nonlinear systems by establishing a convex SOF synthesis framework.
%
%The SOF controller obtained using the framework has guaranteed global stability properties.

%Additionally, the work in \cite{astolfi2002hamilton} establishes sufficient conditions for local SOF stabilization in a nonlinear (affine) system.
%and \cite{Dalsmo1995} contains an $H_\infty$ approach to synthesize output-feedback controllers for passive nonlinear systems.

% SOF controllers have generated a lot of research interest because of their practicality.
%
% In most engineering applications, sensor information is readily available as opposed to perfect full-state information for state-feedback control, which cannot be obtained.
%
% Therefore, SOF controllers are an attractive choice for practical feedback control applications.
%
% Recent studies showcase SOF control design for engineering systems such as the transition control for an incompressible channel flow model in the works \cite{Heins2016, yao2018, yao2019}.
%
% Moreover, the authors of \cite{mushtaq2021feedback} present a Quadratic Constraint (QC) framework for estimating stability and region of attractions for feedback controllers.

The work presented in this paper establishes convexity of SOF control synthesis for an LTI system interacting in feedback with a $lossless$ uncertainty element.
The $lossless$ property is a special case of passivity; energy is strictly conserved, neither being created nor dissipated.
We first use quadratic constraints~(QC) to bound the $lossless$ behavior of the nonlinearity. 
The QC allows us to formulate a bilinear matrix inequality~(BMI).
We then show that the particular BMI that arises in this problem has a structure that reduces it to a linear matrix inequality~(LMI).
The LMI condition establishes the convexity of the associated SOF synthesis problem.
In addition, we establish necessary and sufficient conditions for feasibility of our synthesis approach through standard application of the projection lemma.
%
%The resulting LMI can be used to solve for a globally stabilizing SOF controller.

The paper is organized as follows. Section 2 defines the QC approach, which is later used in Section 3 to establish the main result of this paper, i.e., explicit conditions for SOF synthesis. Section 4 contains an example problem and Section 5 is the conclusion.
\section{Formulating the Framework}
Consider a state-space system of the following form:
\begin{equation}
\begin{aligned}
    & \Dot{x}(t) = Ax(t) + Bu(t) + z(t) \\
    & y(t) = Cx(t) \\
    & z(t) = N(x(t))x(t),
\end{aligned}
\label{eq:dynamics}
\end{equation}
where the vectors $x(t) \in \R^n$, $u(t) \in \R^q$, and $y(t) \in \R^p$ are the states, control inputs, and system outputs, respectively. 
Moreover, $A \in \R^{n \times n}$, $B \in \R^{n \times q}$ and $C \in \R^{p \times n}$ are the system, control, and output matrices, respectively. 
\tm{Also, $N: \R^n \to \R^{n\times n}$ is a continuous function such that $N(0) = 0$ and each entry of $z(t) = N(x(t))x(t)$ is a continuous function in $x(t)$.
%
%Also, $N(x(t)): \R^n \to \R^{n\times n}$ is a function of $x(t)$ so that each entry of $z(t) = N(x(t))x(t)$ is a polynomial in $x(t)$
%
Here, we only consider \emph{lossless} nonlinearities for which $\left \langle x(t), z(t) \right \rangle = 0$ for all $x(t)$, \tm{where $\left \langle \cdot, \cdot \right \rangle$ is the inner-product of signals}.
Note that $\left \langle x(t), z(t) \right \rangle = x(t)^{\mathrm{T}} N(x(t)) x(t) = 0$ is a nonlinear constraint in $N$.
This constraint can be satisfied by any linear or nonlinear continuous function $N$ such that $N(x(t))$ is a skew-symmetric matrix for all $x(t)$.
However, other continuous functions are possible as long as $\left \langle x(t), z(t) \right \rangle = 0$ for all $x(t)$.}
Lossless nonlinearities are a special case of passive nonlinearities for which energy is conserved (i.e., energy is neither created nor dissipated)~\cite{khalil01}.
%For a certain class of nonlinear systems, the nonlinearity $z(t)$ can be characterized as energy-conserving i.e., the inner product $\left \langle x(t), z(t) \right \rangle = 0$.
%
%This energy-conserving property is also termed as $lossless$.
%
%A $lossless$ nonlinearity only distributes the energy of the system between its modes without any energy creation.
%
%In other words, the term $z(t)$ is non-dissipative.
%
%Authors in \cite{kalur2021nonlinear} showed that global stability of a system can be guaranteed by bounding its $lossless$ nonlinearity with a set of quadratic constraints.
%
%They showed that a feasible solution to the following LMI in variables $P \in \mathcal{S}^{n}_{++}$ and $\xi_o \in \R_{\lessgtr 0} $ is a necessary condition for global stability:

As shown in \cite{kalur2021nonlinear}, we can derive a matrix inequality condition using the Lyapunov stability theorem to capture the $lossless$ property.
Consider one such matrix inequality for the uncontrolled (open-loop) system studied in \cite{kalur2021nonlinear}, i.e., set $u(t) = 0$ in \eqref{eq:dynamics}:
 \begin{dmath}
\begin{bmatrix}
A^{\mathrm{T}} {P} + {P} A && {P} \\
{P} && {0} \\
\end{bmatrix}
+
\xi_o
\begin{bmatrix}
{0} && {I} \\
{I} && {0}
\end{bmatrix} \leq -\begin{bmatrix}
\epsilon {P} && {0} \\
{0} &&  {0}
\end{bmatrix}.
\label{eq:global_LMI_QC}
\end{dmath}
\tm{Here $\xi_o \in \mathbb{R}_{\lessgtr 0}$ is the Lagrange multiplier and $P\in \mathcal{S}_{++}^n$, where $\mathcal{S}_{++}^n$ is the set of $n\times n$ symmetric, positive definite matrices.
Feasibility of the matrix inequality \eqref{eq:global_LMI_QC} implies that $V(x(t))=x(t)^{\mathrm{T}} P x(t)$ is a Lyapunov function.
This can be verified by multiplying \eqref{eq:global_LMI_QC} on the left and right by $\begin{bmatrix}
    x(t)^{\mathrm{T}} && z(t)^{\mathrm{T}}
\end{bmatrix}$ and $\begin{bmatrix}
    x(t)^{\mathrm{T}} && z(t)^{\mathrm{T}}
\end{bmatrix}^{\mathrm{T}}$, respectively.
This gives the following constraint along trajectories of the system \eqref{eq:dynamics}:
\begin{align}
   \dot{V}(x(t)) +\xi_o x(t)^{\mathrm{T}} z(t) \le -\epsilon V(x(t))
\end{align}
Apply the lossless constraint to show that $\dot{V}(x(t)) \leq -\epsilon V(x(t))$, i.e., $V(x(t))$ decays with rate $\epsilon \in \R_{>0}$ along trajectories.}
%\tm{The inequality in \eqref{eq:global_LMI_QC} is derived using a quadratic Lyapunov candidate function: $V(x) = x^{\mathrm{T}} P x$ such that its time rate of change must satisfy $\Dot{V}(x) \leq -\epsilon V(x)$, where $\Dot{V}(x) = 2x^{\mathrm{T}} P \Dot{x}$ and the right-hand side of the inequality is the rate of decay condition for an arbitrarily small constant $\epsilon \in \mathbb{R}_{> 0}$.
%
%Here, $P \in \mathcal{S}^{n}_{++}$ is a positive definite matrix and $\mathcal{S}^{n}_{++}$ defines the set of positive definite matrices. Then, the following inequality results: $ 2x^{\mathrm{T}} P \Dot{x} \leq -\epsilon x^{\mathrm{T}} P x$.}
%
%Since $\left \langle x(t), z(t) \right \rangle = 0$, a Lagrange multiplier $\xi_o$ is used to constrain the $lossless$ nonlinear behavior as $2x^{\mathrm{T}} P \Dot{x} + 2 \xi_o x^{\mathrm{T}} z \leq -\epsilon x^{\mathrm{T}} P x$.
%
%Then, by substituting the dynamics, we obtain the inequality in \eqref{eq:global_LMI_QC}.
%
Further details about the derivation are given in \cite{kalur2021nonlinear}.

The dynamics in \eqref{eq:dynamics} are open-loop stable for $u(t) = 0$ if there exists a $P > 0$ satisfying \eqref{eq:global_LMI_QC}.
Thus, by extension, \eqref{eq:dynamics} is closed-loop stable for $u(t) = Ky(t) = KCx(t)$ if there exists a $P > 0$ satisfying the following:
\begin{dmath}
\begin{bmatrix}
(A + BKC)^{\mathrm{T}} {P} + {P} (A+BKC) && {P} \\
{P} && {0} \\
\end{bmatrix}
+
\xi_o
\begin{bmatrix}
{0} && {I} \\
{I} && {0}
\end{bmatrix} \leq -\begin{bmatrix}
\epsilon {P} && {0} \\
{0} &&  {0}
\end{bmatrix},
\label{eq:global_LMI_QC_c}
\end{dmath}
where $K \in \R^{q \times p} $ is the SOF controller gain.
However, the solution is difficult to compute because the inequality in~\eqref{eq:global_LMI_QC_c} is bilinear in the variables $P$ and $K$.
To remedy the bilinearity issue, we show in section \ref{sec:3} that the BMI in \eqref{eq:global_LMI_QC_c} can be easily reduced to an equivalent LMI, which is convex and easily solvable.
\section{Output Feedback Controller (Main Result)}
\label{sec:3}
This section establishes conditions to solve \eqref{eq:global_LMI_QC_c} for an asymptotically stabilizing $K$.
Specifically, we show---using standard techniques---that \eqref{eq:global_LMI_QC_c} can be formulated as a convex problem in variable $K$ and that $P$ is a solution with a fixed structure.
Furthermore, we provide explicit feasibility conditions on %the null spaces of $B$ and $C$ matrices
the existence of a stabilizing $K$.

\begin{lemma}\label{lemma:1}
There exists $P \in \mathcal{S}^{n}_{++}$, $\xi_o \in \R_{\lessgtr 0}$, and $K \in \R^{q \times p}$ satisfying \eqref{eq:global_LMI_QC_c} if and only if there exists $K$ satisfying the following:
\begin{equation}
(A+BKC) + (A+BKC)^{\mathrm{T}} + \epsilon I \leq 0.   \label{eq:control_lmi}
\end{equation}
\end{lemma}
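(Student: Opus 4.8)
The plan is to move every term of the bilinear inequality \eqref{eq:global_LMI_QC_c} to one side and read it as a single $2\times 2$ block matrix inequality,
\begin{equation*}
\begin{bmatrix}
(A+BKC)^{\mathrm{T}}P + P(A+BKC) + \epsilon P & P + \xi_o I \\
P + \xi_o I & 0
\end{bmatrix} \leq 0 ,
\end{equation*}
and then exploit the fact that the lower-right block is identically zero. The key observation is elementary: a real symmetric block matrix $\begin{bmatrix} M & N \\ N^{\mathrm{T}} & 0 \end{bmatrix}$ can be negative semidefinite only if $N = 0$, since evaluating the associated quadratic form on vectors $(v,w)$ gives $v^{\mathrm{T}}Mv + 2v^{\mathrm{T}}Nw$, which is unbounded above in $w$ whenever $N \neq 0$. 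Applying this with $N = P + \xi_o I$ forces $P = -\xi_o I$, and since $P \in \mathcal{S}^n_{++}$ this in turn forces $\xi_o < 0$.

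For the ``only if'' direction I would substitute $P = -\xi_o I$ into the $(1,1)$ block, obtaining $-\xi_o\big((A+BKC)^{\mathrm{T}} + (A+BKC) + \epsilon I\big) \leq 0$; dividing through by the positive scalar $-\xi_o$ yields \eqref{eq:control_lmi} exactly. For the ``if'' direction, given any $K$ satisfying \eqref{eq:control_lmi}, I would simply exhibit the explicit certificate $P = I \in \mathcal{S}^n_{++}$ and $\xi_o = -1 \in \R_{\lessgtr 0}$: then $P + \xi_o I = 0$, the $(1,1)$ block equals $(A+BKC)^{\mathrm{T}} + (A+BKC) + \epsilon I \leq 0$ by hypothesis, and the $(2,2)$ block is zero, so the block matrix is negative semidefinite and \eqref{eq:global_LMI_QC_c} holds.

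I do not anticipate a genuine obstacle; the only step requiring a moment of care is the claim that a zero diagonal block forces the off-diagonal block to vanish, and this is where the special structure of the lossless quadratic-constraint multiplier $\begin{bmatrix} 0 & I \\ I & 0 \end{bmatrix}$ does all the work---it collapses the two decision variables $P$ and $\xi_o$ onto the one-parameter family $P = -\xi_o I$. Once that reduction is made, \eqref{eq:control_lmi} is manifestly affine in $K$, which is the convexity claim advertised in the paper.
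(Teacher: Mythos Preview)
Your proposal is correct and follows essentially the same route as the paper's proof: both arguments observe that the zero lower-right block forces the off-diagonal block $P + \xi_o I$ to vanish, whence $P = -\xi_o I$ with $\xi_o < 0$, and then divide the $(1,1)$ block by $-\xi_o$ to obtain \eqref{eq:control_lmi}, with $P = I$, $\xi_o = -1$ serving as the canonical certificate for the converse. Your version is slightly more explicit in justifying why a zero diagonal block forces the off-diagonal to vanish and in separating the two directions of the equivalence, but the underlying idea is identical.
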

\begin{proof}
%$(\Rightarrow)$
Notice that the lower right block of \eqref{eq:global_LMI_QC_c} is zero.
Thus, the off-diagonal terms must be zero in order for $\eqref{eq:global_LMI_QC_c}$ to be feasible.
Therefore, $P = -\xi_o I$ and $\xi_o < 0$.
Hence, \eqref{eq:global_LMI_QC_c} reduces to the following form:
\begin{equation}
(A + BKC)^{\mathrm{T}} (-\xi_o I) + ( -\xi_o I)(A + BKC) - \xi_o \epsilon I \leq 0.    
\label{eq:global_LMI_QC_sol2}
\end{equation}
%
%We can see in \eqref{eq:global_LMI_QC_sol2} that the eigenvalues of $A+BKC$ must be negative if $K$ is a stabilizing SOF gain, which satisfies \eqref{eq:global_LMI_QC_c}.
%
Since $\xi_o$ is a negative scaling factor, we can eliminate it without affecting the solution to \eqref{eq:global_LMI_QC_sol2}.
%
%Recall from Lemma 1 that $P = I$ for global stability.
%
Thus, the following LMI in $K$ results:
\begin{equation}
    (A + BKC)^{\mathrm{T}} + (A + BKC) + \epsilon I \leq 0.
    \label{eq:global_LMI_QC_sol3}
\end{equation}
%Any feasible solution of \eqref{eq:global_LMI_QC_sol3} satisfies \eqref{eq:global_LMI_QC_c}.
%
%$(\Leftarrow)$ The matrix $K$ obtained from \eqref{eq:global_LMI_QC_sol3} satisfies  \eqref{eq:global_LMI_QC_c} given the solutions $P = -\xi_o I$ and $\xi_o < 0$.
%
Notice that any choice of $\xi_o < 0$ will satisfy the solution in~\eqref{eq:global_LMI_QC_c} since the resulting $P$ can always be re-scaled to an identity and the same argument used for deriving~\eqref{eq:global_LMI_QC_sol3} will follow.
Therefore, the solution to~\eqref{eq:global_LMI_QC_c} can be generalized to $P = I$ and $\xi_o = -1$.
The controller gain (if one exists) can be obtained by solving the feasibility problem in~\eqref{eq:global_LMI_QC_sol3}, which is now convex and is an equivalent condition to~\eqref{eq:global_LMI_QC_c}.
Therefore, the solution $K$ from \eqref{eq:global_LMI_QC_sol3} is a solution of \eqref{eq:global_LMI_QC_c} and vice versa.
\end{proof}
\begin{remark}
%From \eqref{eq:global_LMI_QC_sol3} and noting that $\epsilon > 0$, the eigenvalues of $(A + BKC) + (A + BKC)^{\mathrm{T}}$ must be strictly negative for a given controller gain $K$.
%
%Hence, the system must be controllable with a full column-rank matrix $B$ and also observable with a full row-rank matrix $C$.
%
%Thus, $(A+BKC) + (A+BKC)^{\mathrm{T}} < 0$ must be satisfied for a given controller $K$. 
%
Note that we have not yet established the explicit conditions that guarantee the existence of $K$ as a feasible solution of \eqref{eq:control_lmi}.
\end{remark}

\begin{lemma}
\tm{There exists an $\epsilon \in \mathbb{R}_{>0}$ and an output-feedback gain $K \in \R^{q \times p}$ satisfying \eqref{eq:control_lmi} if and only if the following set of conditions are satisfied:}
\tm{\begin{equation}
    \begin{aligned}
        & U_B^{\mathrm{T}}(A^{\mathrm{T}} + A)U_B < 0\\
        & U_C^{\mathrm{T}}(A^{\mathrm{T}} + A)U_C < 0,
    \end{aligned}
    \label{eq:projection_lemma_conditions_v1}
\end{equation}}
\end{lemma}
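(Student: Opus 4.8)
The plan is to read the final statement as an instance of the \emph{Projection Lemma} (also called the Elimination Lemma), preceded by a short reduction that disposes of the scalar $\epsilon$. First I would show that the existence of some $\epsilon \in \R_{>0}$ and a gain $K$ satisfying \eqref{eq:control_lmi} is equivalent to the existence of a $K$ satisfying the \emph{strict} inequality $(A+BKC)+(A+BKC)^{\mathrm{T}} < 0$. If \eqref{eq:control_lmi} holds, then $(A+BKC)+(A+BKC)^{\mathrm{T}} \le -\epsilon I < 0$; conversely, if the strict inequality holds, its left-hand side is a fixed symmetric matrix whose largest eigenvalue is $-\delta<0$ for some $\delta>0$, and $\epsilon=\delta$ recovers \eqref{eq:control_lmi}. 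So the $\epsilon$-quantifier costs nothing, and I may work with the strict inequality from here on.

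Next I would write the strict inequality in the canonical Projection-Lemma form. With $\mathcal{Q} := A^{\mathrm{T}}+A$, it reads $\mathcal{Q} + BKC + (BKC)^{\mathrm{T}} < 0$, i.e.\ the structure $\mathcal{Q} + \mathcal{U}K\mathcal{V} + (\mathcal{U}K\mathcal{V})^{\mathrm{T}} < 0$ with $\mathcal{U}=B$ and $\mathcal{V}=C$. The Projection Lemma then asserts that such a $K$ exists if and only if $U_B^{\mathrm{T}}\mathcal{Q}\,U_B < 0$ and $U_C^{\mathrm{T}}\mathcal{Q}\,U_C < 0$, where the columns of $U_B$ form a basis of $\ker B^{\mathrm{T}}$ (the orthogonal complement of the range of $B$) and the columns of $U_C$ form a basis of $\ker C$. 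Substituting $\mathcal{Q}=A^{\mathrm{T}}+A$ gives precisely \eqref{eq:projection_lemma_conditions_v1}, proving both directions simultaneously. For the write-up I would either cite the lemma or include its short proof: the ``only if'' part is immediate by congruence (multiply on the left/right by $U_B^{\mathrm{T}}$ and $U_B$, which annihilates the $B$-terms since $B^{\mathrm{T}}U_B=0$, and likewise for $C$), while the ``if'' part invokes the standard explicit construction of a feasible $K$ from the two null-space inequalities.

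The main obstacle here is bookkeeping rather than mathematics: I must state the Projection Lemma with the correct nondegenerate hypotheses, in particular handling the cases where $B$ has full row rank or $C$ has full column rank, in which $U_B$ (resp.\ $U_C$) is empty and the corresponding condition in \eqref{eq:projection_lemma_conditions_v1} is vacuous. I would also make explicit that $U_B$ and $U_C$ are determined only up to a change of basis, and that the inequalities in \eqref{eq:projection_lemma_conditions_v1} are independent of that choice since any two bases of a subspace are related by an invertible congruence. With these points settled, the result follows directly from the standard lemma with no further computation.
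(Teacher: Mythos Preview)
Your proposal is correct and follows essentially the same route as the paper: reduce \eqref{eq:control_lmi} to the strict inequality $(A+BKC)+(A+BKC)^{\mathrm{T}}<0$, rewrite it as $A^{\mathrm{T}}+A+BKC+C^{\mathrm{T}}K^{\mathrm{T}}B^{\mathrm{T}}<0$, and invoke the Projection Lemma. If anything, you are more careful than the paper in establishing the two-way equivalence between the $\epsilon$-version and the strict version and in handling the degenerate cases where $U_B$ or $U_C$ is empty.
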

\tm{where $U_{B} = null(B^{\mathrm{T}})$ and $U_{C} = null(C)$ are the null-spaces of $B^{\mathrm{T}}$ and $C$, respectively.}
%\msh{let's work out a rough statement of the theorem, so we can start polishing it into shape. this will determine much of what we need to include upstream in the paper.}
\begin{proof}
%\begin{comment}
%We need to show that the solution for \eqref{eq:global_LMI_QC} results in $P = I$ to obtain a globally stabilizing controller $K$.
%
%The proof follows from the projection lemma applied to \eqref{eq:global_LMI_QC_sol3} \cite{gahinet1993lmi, Boyd2004}.
\tm{We showed in Lemma \ref{lemma:1} that the following LMI results from the optimal solution $P = I$ and $\xi_o = -1$:
\begin{dmath}
    (A+BKC) + (A+BKC)^{\mathrm{T}} + \epsilon I \leq 0
    \label{eq:result_lmi_1}
\end{dmath}
which is equivalent to \eqref{eq:global_LMI_QC_c}.
Furthermore, the LMI in \eqref{eq:result_lmi_1} implies the following LMI for an arbitrarily chosen ${\epsilon \in \mathbb{R}_{>0}}$:
\begin{dmath}
    (A+BKC) + (A+BKC)^{\mathrm{T}} < 0.
    \label{eq:result_lmi_2}
\end{dmath}
Then, the LMI in \eqref{eq:result_lmi_2} can be written in an equivalent projection lemma form (see section 2.6.2 in \cite{Boyd1994}):
\begin{dmath}
A^{\mathrm{T}} + A + B K C + C^{\mathrm{T}} K^{\mathrm{T}} B^{\mathrm{T}}
< 0.
\label{eq:projection_lemma_form}
\end{dmath}
This is a particularly useful result as we can use the properties of the projection lemma to create two equivalent LMIs of the following form to solve for $K$:
\begin{equation}
    \begin{aligned}
        & U_B^{\mathrm{T}} \Psi U_B < 0 \\
        & U_C^{\mathrm{T}} \Psi U_C < 0
    \end{aligned}
    \label{eq:proj_form}
\end{equation}
where $\Psi = A^{\mathrm{T}} + A$, and $U_B$ and $U_C$ are the null-spaces of $
B^{\mathrm{T}}$ and $
C$, respectively. 
%
%The null-spaces have the form:
%\begin{equation}
%    \begin{aligned}
%        & N_B =  \begin{bmatrix}
%        Q && 0 \\
%        0 && I
%        \end{bmatrix} \begin{bmatrix}
%        U_B && 0\\
%        0 && I
%        \end{bmatrix}\\
%        & N_C = \begin{bmatrix}
%        U_C && 0\\
%        0 && I
%        \end{bmatrix}
%    \end{aligned}
%\end{equation}
%where $Q = P^{-1}$. By expanding the two LMIs in \eqref{eq:proj_form}, we obtain the following:
%\begin{equation}
%\begin{aligned}
%    & \begin{bmatrix}
%    U_B^{\mathrm{T}} (QA^{\mathrm{T}} + AQ + \epsilon Q) U_B && U_B(I + \xi_o Q) \\
%    (I + \xi_o Q) U_B^{\mathrm{T}} && 0
%    \end{bmatrix} \leq 0 \\
%    & \begin{bmatrix}
%    U_C^{\mathrm{T}} (A^{\mathrm{T}}P + PA + \epsilon P) U_C && U_C(P + \xi_o I) \\
%    (P + \xi_o I) U_C^{\mathrm{T}} && 0
%    \end{bmatrix} \leq 0.
%    \label{eq:LMIs_Lemma_form}
%\end{aligned}
%\end{equation}
%Notice that to satisfy the inequalities in \eqref{eq:LMIs_Lemma_form}, the off-diagonal terms must be zero. 
%
%Here, $B$ and $C$ are full column and row-rank matrices, respectively, so, $U_B$ and $U_C$ are non-zero.
%
%Thus, $I + \xi_o Q = 0 \to Q = -\frac{1}{\xi_o}I$ and $P + \xi_o I = 0 \to P = -\xi_o I$.
%
%Simply substituting $P$ and $Q$ into equation \eqref{eq:LMIs_Lemma_form} results in the following conditions for $K$ to exist:
%\begin{equation}
%    \begin{aligned}
%        & U_B^{\mathrm{T}}(A^{\mathrm{T}} + %A + \epsilon I)U_B \leq 0\\
%        & U_C^{\mathrm{T}}(A^{\mathrm{T}} + A + \epsilon I)U_C \leq 0.
%    \end{aligned}
%    \label{eq:projection_lemma_conditions}
%\end{equation}
}
%\tm{We can deduce from \eqref{eq:projection_lemma_conditions} that we must have strict dissipativity i.e. $A^{\mathrm{T}} + A < 0$ for a globally stable $K$.}
%
\end{proof}
\begin{remark}
\tm{If $B$ and $C$ are full-rank, then the associated null spaces $U_B$ and $U_C$ will only contain the trivial zero vector, respectively.}
In any of these cases, we can simply use a congruence transformation followed by a change of variables (\tm{i.e.,} $Y = PBK$ or $Y = KCP^{-1}$) to solve for the variables $P$ and $K$ \cite{Boyd2004}.
\end{remark}
\begin{remark}
In cases where the $lossless$ nonlinearity defines the global behavior of the system, the LMI \eqref{eq:control_lmi} gives a sufficient condition for global asymptotic stability.
\end{remark}
%\begin{remark}
    %
 %   \tm{It is sufficient to check the condition in \eqref{eq:control_lmi} for asymptotic stability given a $K$ that was computed from any other method.}
    %
%\end{remark}
%
An SOF gain $K$ can be obtained simply by solving~\eqref{eq:control_lmi} using an LMI solver, such as \texttt{sdpt3} or \texttt{mosek}.
\section{Example Problem}
\tm{Consider the system in \eqref{eq:dynamics} with the following matrices for $x=\begin{bmatrix}
x_{1} && x_{2}\end{bmatrix}  
^{\mathrm{T}} \in \mathbb{R}^2:$}
\begin{equation}
\begin{aligned}
    & A =  \begin{bmatrix}
    -0.1 && 1 \\
    0 && -0.1
    \end{bmatrix}, 
    B = \begin{bmatrix}
    1 \\
    1 \\
    \end{bmatrix},
    \tm{N(x) = \begin{bmatrix}
    0 && -x_1 \\
    x_1 && 0
    \end{bmatrix}} \\
   & C = \begin{bmatrix}
    1 && 2
    \end{bmatrix}.
    \end{aligned}
    \label{eq:example}
\end{equation}
Then, $K \in \mathbb{R}$ is the SOF gain to compute.
We choose $\epsilon = 10^{-6}$ and obtain $K = -3.6231$ by solving \eqref{eq:global_LMI_QC_sol3} using the parser \texttt{cvx} and the LMI solver \texttt{sdpt3}~\cite{grant2008cvx}.
\begin{figure}[!htb]
    \centering
    \includegraphics[scale = 0.5]{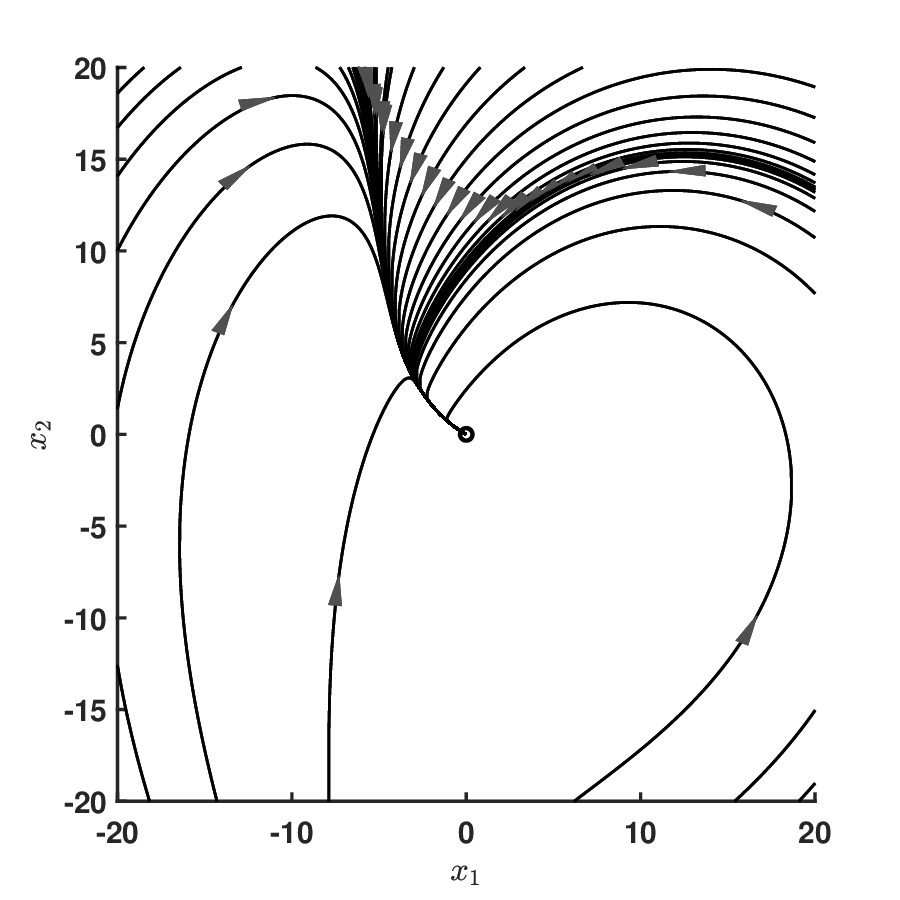}
    \caption{The phase portrait shows that the origin of the controlled system is asymptotically stable.}
    \label{fig:simulation}
\end{figure}
%\begin{comment}
\begin{figure}[htbp]
    \centering
    \hspace*{-0.2cm}
    \includegraphics[scale = 0.5]{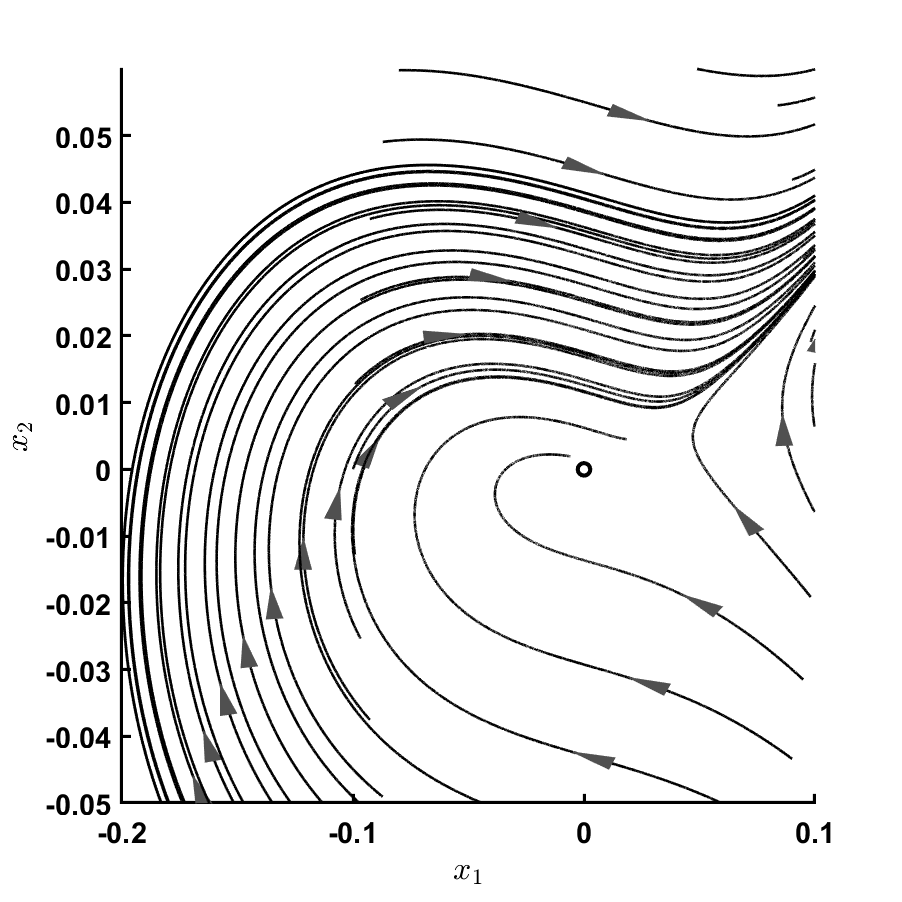}
    
    \caption{The phase portrait shows that the origin of the uncontrolled system is unstable, since some trajectories diverge.}
    \label{fig:simulation_2}
\end{figure}
%\end{comment}
Fig. \ref{fig:simulation} shows the phase portrait of the controlled (closed-loop) system.
%
%We specifically study the quantity $\frac{\|x(t)\|}{\|x(0)\|}$, which is the normalized distance of states from the origin.
%
We can see that $\|x(t)\|$  converges to the origin for all trajectories of the controlled (closed-loop) system.
%
%The closed-loop matrix $A + BKC$ satisfies \eqref{eq:global_LMI_QC_c} and consequently, the condition in \eqref{eq:global_LMI_QC_sol3}, therefore, we see a monotonic decay in state trajectories, which is a necessary condition for global stability.
%
%For the uncontrolled system, $\|x(0)\|\leq 0.1$ are sufficient for states to diverge from the origin as opposed to the controlled system.
%
On the contrary, Fig. \ref{fig:simulation_2} shows that some of the trajectories diverge from the origin for the uncontrolled (open-loop) system. 
Note the difference in scales between Fig. \ref{fig:simulation} and Fig. \ref{fig:simulation_2}: relatively small magnitudes $\|x(0)\|$ result in divergent trajectories for the uncontrolled (open-loop) system, whereas no divergent trajectories exist for the controlled system even for relatively large magnitudes $\|x(0)\|$.
%smaller $\|x(0)\|$ magnitudes are required to initiate divergence of trajectories for the uncontrolled (open-loop) system.
%
This further confirms that the designed controller successfully provides asymptotic stability to the system in~\eqref{eq:example}.
%
%The contradictory behavior of the trajectories is the result of some of the trajectories getting trapped inside the region of attraction.
%
%Therefore, few of the trajectories in the lower half of the plot decay to zero.
%
%Notice that some of the stable trajectories in fig. \ref{fig:simulation_2} decay to zero after travelling some distance away from the origin.
%
%This type of effect is know as the transient growth, which is found in systems with non-normal $A$ matrix, similar to the one chosen here.
%
%We can produce similar phase portraits as in fig. \ref{fig:simulation} and \ref{fig:simulation_2} for any $\|x(0)\| > 0.1$ and show that $x(t) \to \infty$ for the uncontrolled system and $x(t) \to 0$ for the controlled system.
%
%For more details on transient growth and region of attractions, the reader is directed to the works in \cite{kalur2021estimating, liu20, Kerswell_2018, plischke2005transient, whidborne2007minimization} as they are outside the scope of study for this paper.
%
\section{Conclusion}
We provide explicit conditions under which an SOF gain can be obtained using the $lossless$ behavior of the nonlinearity.
Thus, an SOF gain for any nonlinear system with a $lossless$ nonlinearity can be solved using the framework in this paper.
Note that the convexity of the problem was established as an LMI feasibility problem, which in general corresponds to a family of stabilizing SOF gains when the problem is feasible.
This non-uniqueness of stabilizing SOF gains opens additional avenues for optimal control using convex optimization methods.
Furthermore, the conditions established for feasibility of the LMI provide requirements on the set of sensors and actuators needed for synthesis.
In conjunction with convexity of the problem, these requirements can potentially be exploited to perform optimal sensor/actuator selection using established methods based on convex optimization~\cite{polyak2014sparse,Jovanovic2014ADMM, zare2018optimal}.
%
% The LMI feasibility conditions

% Moreover, the work presented in this paper can be further extended to actuator and sensor selection by solving sparsity-based optimization problems as shown in \cite{polyak2014sparse}. 
% %
% The $r_1$ and $c_1$-optimization problems presented in \cite{polyak2014sparse} are restricted to LTI systems, however, they can be used to solve \eqref{eq:control_lmi} for a sparse SOF controller, since the LMI is convex.
% %
% Additional sparsity-based and optimal actuator and sensor selection problems are presented in the works of \cite{Jovanovic2014ADMM, zare2018optimal}.
%
\section{Acknowledgements}
This material is based upon work supported by the ARO under grant number W911NF-20-1-0156 and the NSF under grant number CBET-1943988. 
%, monitored by Dr. Matthew Munson.
%
MSH acknowledges support from the AFOSR under award number FA 9550-19-1-0034.
%, monitored by Dr. Gregg Abate, 
%and the NSF under grant
%number CBET-1943988.
% The views and conclusions contained in this document are those of the authors and should not be interpreted as representing the official policies, either expressed or implied, of the Army Research Office, National Science Foundation, or the U.S. Government. The U.S. Government is authorized to reproduce and distribute reprints for Government purposes notwithstanding any copyright notation herein.
\bibliographystyle{IEEEtran}
\bibliography{ref}

% Generated by IEEEtran.bst, version: 1.14 (2015/08/26)
\begin{thebibliography}{10}
\providecommand{\url}[1]{#1}
\csname url@samestyle\endcsname
\providecommand{\newblock}{\relax}
\providecommand{\bibinfo}[2]{#2}
\providecommand{\BIBentrySTDinterwordspacing}{\spaceskip=0pt\relax}
\providecommand{\BIBentryALTinterwordstretchfactor}{4}
\providecommand{\BIBentryALTinterwordspacing}{\spaceskip=\fontdimen2\font plus
\BIBentryALTinterwordstretchfactor\fontdimen3\font minus
  \fontdimen4\font\relax}
\providecommand{\BIBforeignlanguage}[2]{{%
\expandafter\ifx\csname l@#1\endcsname\relax
\typeout{** WARNING: IEEEtran.bst: No hyphenation pattern has been}%
\typeout{** loaded for the language `#1'. Using the pattern for}%
\typeout{** the default language instead.}%
\else
\language=\csname l@#1\endcsname
\fi
#2}}
\providecommand{\BIBdecl}{\relax}
\BIBdecl

\bibitem{bernstein1992}
D.~S. Bernstein, ``Some open problems in matrix theory arising in linear
  systems and control,'' \emph{Linear Algebra and Its Applications}, 1992.

\bibitem{blondel2000survey}
V.~D. Blondel and J.~N. Tsitsiklis, ``A survey of computational complexity
  results in systems and control,'' \emph{Automatica}, vol.~36, no.~9, pp.
  1249--1274, 2000.

\bibitem{syrmos1997static}
V.~L. Syrmos, C.~T. Abdallah, P.~Dorato, and K.~Grigoriadis, ``Static output
  feedback—a survey,'' \emph{Automatica}, vol.~33, no.~2, pp. 125--137, 1997.

\bibitem{dinh2011combining}
Q.~T. Dinh, S.~Gumussoy, W.~Michiels, and M.~Diehl, ``Combining convex--concave
  decompositions and linearization approaches for solving {BMI}s, with
  application to static output feedback,'' \emph{IEEE Transactions on Automatic
  Control}, vol.~57, no.~6, pp. 1377--1390, 2011.

\bibitem{el1997cone}
L.~El~Ghaoui, F.~Oustry, and M.~AitRami, ``A cone complementarity linearization
  algorithm for static output-feedback and related problems,'' \emph{IEEE
  transactions on automatic control}, vol.~42, no.~8, pp. 1171--1176, 1997.

\bibitem{cao1998static}
Y.-Y. Cao, J.~Lam, and Y.-X. Sun, ``Static output feedback stabilization: an
  {ILMI} approach,'' \emph{Automatica}, vol.~34, no.~12, pp. 1641--1645, 1998.

\bibitem{iwasaki1995xy}
T.~Iwasaki and R.~Skelton, ``The xy-centring algorithm for the dual {LMI}
  problem: a new approach to fixed-order control design,'' \emph{International
  Journal of Control}, vol.~62, no.~6, pp. 1257--1272, 1995.

\bibitem{toivonen1985descent}
H.~T. Toivonen and P.~M. M{\"a}kil{\"a}, ``A descent {Anderson-Moore} algorithm
  for optimal decentralized control,'' \emph{Automatica}, vol.~21, no.~6, pp.
  743--744, 1985.

\bibitem{blondel1997}
V.~Blondel and J.~N. Tsitsiklis, ``{NP}-hardness of some linear control design
  problems,'' \emph{SIAM Journal of Control and Optimization}, 1997.

\bibitem{Dalsmo1995}
M.~Dalsmo and O.~Egeland, ``{$H_\infty$} control of nonlinear passive systems
  by output feedback,'' in \emph{Proceedings of 34th IEEE Conference on
  Decision and Control}, vol.~1, 1995, pp. 351--352.

\bibitem{zhao2010robust}
D.~Zhao and J.-L. Wang, ``Robust static output feedback design for polynomial
  nonlinear systems,'' \emph{International journal of robust and nonlinear
  control}, vol.~20, no.~14, pp. 1637--1654, 2010.

\bibitem{saat2015nonlinear}
S.~Saat and S.~K. Nguang, ``Nonlinear {$H_\infty$} output feedback control with
  integrator for polynomial discrete-time systems,'' vol.~25, no.~7, pp.
  1051--1065, 2015.

\bibitem{ekramian2020static}
M.~Ekramian, ``Static output feedback problem for {Lipschitz} nonlinear
  systems,'' \emph{Journal of the Franklin Institute}, vol. 357, no.~3, pp.
  1457--1472, 2020.

\bibitem{astolfi2002hamilton}
A.~Astolfi and P.~Colaneri, ``A {Hamilton-Jacobi} setup for the static output
  feedback stabilization of nonlinear systems,'' \emph{IEEE transactions on
  automatic control}, vol.~47, no.~12, pp. 2038--2041, 2002.

\bibitem{Schmid2001}
P.~J. Schmid and D.~S. Henningson, \emph{{Stability and transition in shear
  flows}}.\hskip 1em plus 0.5em minus 0.4em\relax Springer, 2001.

\bibitem{kalurAIAA2020}
A.~Kalur, P.~Seiler, and M.~Hemati, ``Stability and performance analysis of
  nonlinear and non-normal systems using quadratic constraints,'' \emph{AIAA
  Aerospace Sciences Meeting, AIAA Paper 2020-0833}, January 2020.

\bibitem{liu20}
C.~Liu and D.~F. Gayme, ``{Input-output inspired method for permissible
  perturbation amplitude of transitional wall-bounded shear flows},''
  \emph{Phys. Rev. E}, vol. 102, p. 063108, Dec 2020.

\bibitem{kalur2021nonlinear}
A.~Kalur, P.~Seiler, and M.~S. Hemati, ``Nonlinear stability analysis of
  transitional flows using quadratic constraints,'' \emph{Physical Review
  Fluids}, vol.~6, no.~4, p. 044401, 2021.

\bibitem{kalur2021estimating}
A.~Kalur, T.~Mushtaq, P.~Seiler, and M.~S. Hemati, ``Estimating regions of
  attraction for transitional flows using quadratic constraints,'' \emph{IEEE
  Control Systems Letters}, pp. 1--1, 2021.

\bibitem{mushtaq2021feedback}
T.~Mushtaq, P.~Seiler, and M.~Hemati, ``Feedback control of transitional flows:
  A framework for controller verification using quadratic constraints,'' in
  \emph{AIAA AVIATION 2021 FORUM}, 2021, p. 2825.

\bibitem{antoulas2005approximation}
A.~C. Antoulas, \emph{Approximation of large-scale dynamical systems}.\hskip
  1em plus 0.5em minus 0.4em\relax SIAM, 2005.

\bibitem{sharma2011}
A.~S. Sharma, J.~F. Morrison, B.~J. McKeon, D.~J. N.~L. Limebeer, W.~H. Koberg,
  and S.~J. Sherwin, ``Relaminarisation of ${R}e_{\tau} = 100$ channel flow
  with globally stabilising linear feedback control,'' \emph{Physics of
  Fluids}, vol.~23, no.~12, Dec 2011.

\bibitem{Heins2016}
P.~H. Heins, B.~L. Jones, and A.~S. Sharma, ``Passivity-based output-feedback
  control of turbulent channel flow,'' \emph{Automatica}, 2016.

\bibitem{yaoAIAAJ2022}
H.~Yao, Y.~Sun, T.~Mushtaq, and M.~Hemati, ``Reducing transient energy growth
  in a channel flow using static output feedback control,'' \emph{AIAA
  Journal}, 2022.

\bibitem{yaoTCFD2022}
H.~Yao, Y.~Sun, and M.~Hemati, ``Feedback control of transitional shear flows:
  Sensor selection for performance recovery,'' \emph{Theoretical and
  Computational Fluid Dynamics}, 2022.

\bibitem{khalil01}
H.~K. Khalil, \emph{{Nonlinear systems; 3rd ed.}}\hskip 1em plus 0.5em minus
  0.4em\relax Prentice-Hall, Princeton, NJ, 2002.

\bibitem{Boyd1994}
S.~Boyd, L.~{El Ghaoui}, E.~Feron, and V.~Balakrishnan, \emph{{Linear Matrix
  Inequalities in System and Control Theory}}, 1994, vol.~15.

\bibitem{Boyd2004}
S.~Boyd and L.~Vandenberghe, \emph{{Convex Optimization}}.\hskip 1em plus 0.5em
  minus 0.4em\relax Cambridge University Press, 2004.

\bibitem{grant2008cvx}
M.~Grant, S.~Boyd, and Y.~Ye, ``Cvx: Matlab software for disciplined convex
  programming,'' 2008.

\bibitem{polyak2014sparse}
B.~T. Polyak, M.~V. Khlebnikov, and P.~S. Shcherbakov, ``{Sparse Feedback in
  Linear Control Systems},'' \emph{Automation and Remote Control}, vol.~75,
  no.~12, pp. 2099--2111, 2014.

\bibitem{Jovanovic2014ADMM}
N.~K. Dhingra, M.~R. Jovanović, and Z.-Q. Luo, ``An {ADMM} algorithm for
  optimal sensor and actuator selection,'' in \emph{53rd IEEE Conference on
  Decision and Control}, 2014, pp. 4039--4044.

\bibitem{zare2018optimal}
A.~Zare and M.~R. Jovanovi{\'c}, ``Optimal sensor selection via proximal
  optimization algorithms,'' in \emph{2018 IEEE Conference on Decision and
  Control (CDC)}.\hskip 1em plus 0.5em minus 0.4em\relax IEEE, 2018, pp.
  6514--6518.

\end{thebibliography}
\end{document}